\DeclarePairedDelimiter\ceil{\lceil}{\rceil}
\def\paragraph{\@startsection{paragraph}{4}%
  \z@\z@{-\fontdimen2\font}%
  {\normalfont\bfseries}}
\newlength\shlength
\newcommand\xshlongvec[2][0]{\setlength\shlength{#1pt}%
	\stackengine{-5.6pt}{$#2$}{\smash{$\kern\shlength%
			\stackengine{7.55pt}{$\mathchar"017E$}%
			{\rule{\widthof{$#2$}}{.57pt}\kern.4pt}{O}{r}{F}{F}{L}\kern-\shlength$}}%
	{O}{c}{F}{T}{S}}
\newcommand{\ds}[1]{\displaystyle{#1}}  
\newcommand{\RN}[1]{%
  \textup{\uppercase\expandafter{\romannumeral#1}}%
}
\newcommand{\meqref}[1]{Eq.~$\! \eqref{#1} $}
\newcommand{\mref}[1]{Sect.~$ \!\ref{#1} $}
\newcommand{\mfig}[1]{Fig.~$ \!\ref{#1} $}
\newtheorem{thm}{Theorem}[section]
\newtheorem{lem}[thm]{Lemma}
\newtheorem{defn}[thm]{Definition}
\newcommand{\RR}{\mathbb{R}}      
\def\OO{{\mathbb O}}
\def\RR{{\mathbb R}}
\def\<{\langle}
\def\>{\rangle}
\DeclareMathOperator{\Cost}{Cost}
\DeclareMathOperator{\SolutionPaths}{SolutionPaths}
\numberwithin{equation}{section}
\begin{document}

	\title{A novel quantum grid search algorithm and its application 
	}
	
	
	\author{Alok Shukla       \and
		Prakash Vedula 
	}
	
	
	\institute{Alok Shukla  \at
		The University of Manitoba, Canada \\
		\email{sajal.eee@gmail.com}           
		\and
		Prakash Vedula \at
	   The University of Oklahoma, Norman, USA \\
		\email{pvedula@ou.edu}           
}

	\date{}

	\maketitle
	
	\begin{abstract}
In this paper we present a novel quantum algorithm, namely the quantum grid search algorithm, to solve a special search problem.  Suppose $ k $ non-empty buckets are given, such that each bucket contains some marked and some unmarked items. In one trial an item is selected from each of the $ k $ buckets. If every selected item is a marked item, then the search is considered successful. This search problem can also be formulated as the problem of finding a ``marked path'' associated with specified bounds on a discrete grid. Our algorithm essentially uses several Grover search operators in parallel to efficiently solve such problems. We also present an extension of our algorithm combined with a binary search algorithm in order to efficiently solve global trajectory optimization problems. Estimates of the expected run times of the algorithms are also presented. We note that computational gain in our algorithm comes at the cost of increased complexity of the quantum circuitry.

\keywords{Quantum computation \and Grover's search algorithm  \and Quantum grid search algorithm \and Brachistochrone problem \and Trajectory optimization \and Calculus of variations \and Global optimization}
		 \subclass{MSC 	68Q12 \and 81P68 \and 	90C26}
	\end{abstract}

\section{Introduction}\label{sec:intro}

A powerful quantum search algorithm was presented by Grover \cite{grover1996fast} to find a marked item in a given $ N $-item database, in $ \OO(\sqrt{N}) $ evaluations of the oracle function $ f $. Here the oracle function $ f $ is a black-box function such that $ f(x)=1 $ if  $ x $ is a marked item, otherwise $ f(x)=0 $. Grover's algorithm offers a quadratic speedup, as clearly a classical search in this situation requires $ \OO(N) $ function calls. Although, a quadratic speedup is impressive, it is natural to ask if it can further be improved. However, it is well-known that Grover's quantum search algorithm is optimal \cite{articleZalka}.
In this paper, we present an algorithm and describe its applications in solving a class of optimization problems, where exponential gains could be achieved in comparison to non-parallel classical algorithms. We remark that if we consider parallel classical algorithms then the computational speedup is only quadratic. 

Now we describe a formulation of the search problem that we solve in this work. Consider $ k $ non-empty buckets, labeled $ X_1,X_2,\cdots X_k $, with each bucket containing some marked and some unmarked items. Suppose one item is taken out blindfolded from each bucket, say $ x_i $ is taken out from $ X_i $ for $ i=1,2,\cdots k $. If each $ x_i $ is marked, then the search is considered successful. 
To illustrate a practical example, we consider the blending problem in which a product is made by blending various components in different quantities. More explicitly, suppose $ X_1,X_2,\cdots X_k $ are the $ k $ components needed to manufacture the product. Further, suppose that in the manufacturing process, for $ i=1,2,\cdots k $, the quantity of the component $ X_i $ used is chosen from the set $ X_{i,1},X_{i,2}, \cdots X_{i,n_i} $. Here $ X_i $ is a bucket and $ X_{i,1},X_{i,2}, \cdots X_{i,n_i} $ are the items in the bucket $ X_i $.  We assume that there exists an oracle function $ f $ which assigns a number denoting the quality of the finished product, for any given choice of the quantities of the components selected. The goal of the optimization problem is to select the quantities of the components to manufacture a product of desired quality, i.e., to find $ x_1,x_2,\cdots x_k $, with $ x_i $ being the quantity of the component $ X_i $ chosen such that $ m \leq f(x_1,x_2,\cdots x_k) $. Here $ m $ denotes the minimum threshold of the quality for the product to be accepted. Suppose only one item is marked in each bucket. Then, clearly in the worst case a non-parallel classical algorithm will need $ \prod_{i=1}^{k} n_i $ searches to discover the desired composition of the product. However, our proposed algorithm will find success within the expected run time of the order of $ \OO\left(\max\left(\sqrt{ n_i} \right)_{i=1}^{i=k} \right) $. 

We present another important class of problems which can be efficiently solved by our proposed algorithms. This is the global trajectory optimization problem using the discretization framework presented in \cite{shuklavedula2019}. We note that in \cite{shuklavedula2019} we used quantum algorithms based on Grover's search algorithm for solution of global trajectory optimization problems, and demonstrated that a quadratic speedup could be achieved in comparison to classical search based algorithms. Our proposed algorithms (Algorithm~$\ref{alg:grid} $ and Algorithm~$ \ref{alg:binary} $) in the present work offer further improvements over the quantum search algorithms presented in \cite{shuklavedula2019}. Further, we recall that in \cite{shuklavedula2019}, we presented a discretization scheme, wherein both the independent variables (such as time)  and the dependent variables were discretized. Our discretization scheme involved the selection of a finite number of coarse-grained states, with a cost function defined on each state. Thereby, the continuous  optimization problem was transformed to a discrete search problem on the discretization grid, in which a state with the minimum associated cost is to be determined. Further, the discretization scheme in \cite{shuklavedula2019} treated discretization in both the physical space as well as the coefficient space, and it also provided mathematical bounds using the coefficients of Chebyshev polynomials while discretizing in the coefficient space. We refer readers to \cite{shuklavedula2019} for further details on the discretization scheme. 
Once discretization is carried out using the framework presented in \cite{shuklavedula2019}, the resulting discrete search problem is amenable to treatment by our proposed quantum grid search algorithms. One additional assumption is the availability of local oracles, which will be described later. 

As noted earlier, the key idea in our proposed Algorithm~$\ref{alg:grid} $ is the use of several Grover search operators in parallel to solve the discrete search problem efficiently. Algorithm~$\ref{alg:grid} $ gives a solution within the specified bounds. Algorithm~$ \ref{alg:binary} $  uses Algorithm~$\ref{alg:grid} $ along with the classical binary search on the cost function, to efficiently solve global trajectory optimization problems, as accurately as required. 

In the first half of the paper, we describe the quantum grid search algorithm, and in the second half of this work, we give its application in solving trajectory optimization problem. More specifically, in Sect.~$ \ref{sect:Quantum} $ we briefly recall Grover's search algorithm, and subsequently in Sect.~$ \ref{sect:grid_search} $ we describe our grid search algorithm. Further, to describe the application of quantum grid search algorithm, we present a formulation of  the general trajectory optimization problem, in Sect.~$ \ref{sect:problem formulation} $. Then in Sect.~$ \ref{sect:application} $, we describe how quantum grid search algorithm, Algorithm~$\ref{alg:grid} $, can be combined with Algorithm~$ \ref{alg:binary} $ to solve a given trajectory optimization problem with greater efficiency. Finally, we present our concluding remarks in \mref{sect:Conclusion}.  

\section{\textbf{Quantum search: Grover's algorithm }} \label{sect:Quantum}

Quantum computing is an exciting field which nicely combines quantum physics with computer science. There has been a lot of progress not only in theoretical development of quantum algorithms, but also in their practical implementation on quantum computers. We refer the readers to any standard book on the subject for a more in-depth treatment (for eg., \cite{nielsen2002quantum}, \cite{rieffel2011quantum} or \cite{yanofsky2008quantum}).  

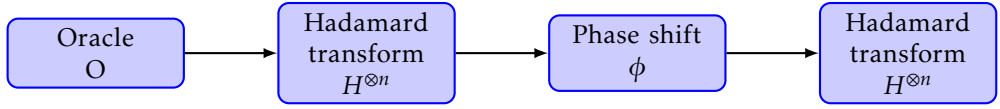
\begin{figure}
	\centering
	\begin{tikzpicture}[%
	scale= 0.8,
	block/.style={
		rectangle,
		draw=blue,
		thick,
		fill=blue!20,
		text width=6em,
		align=center,
		rounded corners,
		minimum height=2em
	},
	block1/.style={
		rectangle,
		draw=blue,
		thick,
		fill=blue!20,
		text width=5em,
		align=center,
		rounded corners,
		minimum height=2em
	},
	line/.style={
		draw,thick,
		-latex',
		shorten >=2pt
	},
	cloud/.style={
		draw=red,
		thick,
		ellipse,
		fill=red!20,
		minimum height=1em
	}
	]
	\path (-9,0) node[block] (G) {Oracle \\ O}
	(-4.5,0) node[block] (H) {Hadamard transform \\ $ H^{\otimes n} $ }   
	(0,0) node[block] (I) {Phase shift \\ $\phi$}  
	(4.5,0) node[block] (J) {Hadamard transform \\ $ H^{\otimes n} $};    
	\draw[-latex,thick] (G) -- (H);                                   
	\draw[-latex,thick] (H) -- (I);  
	\draw[-latex,thick]   (I)--(J); 
	\end{tikzpicture}
	\caption{The Grover operator $ G $} \label{fig_Grover's_algo}
\end{figure}

Here we will briefly recall Grover's quantum search algorithm \cite{grover1996fast} for finding one or more marked items out of a list of $ N $ given items. 
Access to an oracle $ O $, which is a black-box function, is assumed. This black-box function is  such that for any item with index $ x $ in the list, i.e., with $ 0 \leq x \leq N-1 $, we have  $ f(x)=1 $ if $ x $ is a marked item, otherwise $ f(x)=0 $.

For simplicity, we assume that $ N = 2^n $ and so the discrete search space can be represented by a $ n $ qubit system. The algorithm begins with the system in $ \ket{0}^{\otimes n} $. 
Then the Hadamard transform $ H^{\otimes n} $ is applied on the input  $ \ket{0}^{\otimes n} $  to transform it to a uniform superimposed state, say $ \ket{\psi} $, such that
\begin{align} \label{Eq:Hadamard}
\ket{\psi} = H^{\otimes n} (\ket{0}^{\otimes n} ) = \frac{1}{\sqrt{N}} \sum_{x=0}^{N-1} \ket{x}.
\end{align}
We note that at the heart of Grover's algorithm lies the Grover operator $ G $ (\mfig{fig_Grover's_algo}), which is essentially the following unitary transformation 
\begin{align} \label{Eq_Grover_operator}
G = \left( H^{\otimes n} (2 \ket{0} \bra{0} - I) H^{\otimes n} \right) O = \left(2 \ket{\psi} \bra{\psi} - I  \right) O.
\end{align}
The phase shift $ \phi $, in \mfig{fig_Grover's_algo}, is the unitary operator $ (2 \ket{0} \bra{0} - I )$, which transforms all non-zero basis state $ \ket{x} $ to $ -\ket{x} $ with $ \ket{0} $ remaining unchanged.
 In fact, as noted earlier  $ G = \left(2 \ket{\psi} \bra{\psi} - I  \right) O$. The action of oracle $ O $ amounts to changing the phase of the marked item. On the other hand, it can be checked that 
\begin{align} \left(2 \ket{\psi} \bra{\psi} - I  \right) (\sum_{k=0}^{N-1} \alpha_k  \ket{k})  = \sum_{k=0}^{N-1} \left( -\alpha_k + 2 m \right) \ket{k} \end{align}
where $ m = \frac{1}{N} \sum_{k=0}^{N-1} \alpha_k $ is the mean of $ \alpha_k $.  Hence, the action of $ \left(2 \ket{\psi} \bra{\psi} - I  \right) $ is essentially an inversion about the mean. Therefore, one iteration of Grover operator results in amplification of the amplitude of the marked item (See \mfig{Fig_Grover_operator_amplitude}). 

\definecolor{qqqqff}{rgb}{0.,0.,1.}
\definecolor{xdxdff}{rgb}{0.49019607843137253,0.49019607843137253,1.}
\definecolor{uuuuuu}{rgb}{0.26666666666666666,0.26666666666666666,0.26666666666666666}
\begin{figure}
	\centering
	\begin{tikzpicture}[line cap=round,line join=round,>=triangle 45,x=0.5535055350553507cm,y=0.9385665529010234cm]
	\def\shift{3} 
	\def\down{-4} 
	\def \x {7}
	\draw [-] (0.,0.) -- (0.,2.);
	\draw [->] (0.,0.) -- (0.,1.);
	\draw [->] (1.,0.) -- (1.,1.);
	\draw [->] (2.,0.) -- (2.,1.);
	\draw [->] (3.,0.) -- (3.,1.);
	\draw [->] (4.,0.) -- (4.,1.);
	\draw [->] (5.,0.) -- (5.,1.);
	\draw [->] (6.,0.) -- (6.,1.);
	\draw [->] (7.,0.) -- (7.,1.);
	\draw (0.,0.)-- (9.,0.);
	\draw (-1,1.2) node[anchor=north] {$ \frac{1}{\sqrt{N}} $};
	\draw (0,-0.12) node[anchor=north] {0};
	\draw (1,-0.12) node[anchor=north] {1};
	\draw (6.75,-0.12) node[anchor=north west] {$N-1$};
	\draw (3.76,-0.12) node[anchor=north west] {$k$};
	\begin{scriptsize}
	\end{scriptsize}
	\draw [-] (\shift+ 10.,0.) -- (\shift+10.,2.);
	\draw [->] (\shift+10.,0.) -- (\shift+10.,1.);
	\draw [->] (\shift+11.,0.) -- (\shift+11.,1.);
	\draw [->] (\shift+12.,0.) -- (\shift+12.,1.);
	\draw [->] (\shift+13.,0.) -- (\shift+13.,1.);
	\draw [->] (\shift+14.,0.) -- (\shift+14.,-1.);
	\draw [->] (\shift+15.,0.) -- (1\shift+5.,1.);
	\draw [->] (\shift+16.,0.) -- (\shift+16.,1.);
	\draw [->] (\shift+17.,0.) -- (\shift+17.,1.);
	\draw (\shift+10.,0.)-- (\shift+19.,0.);
	\draw (\shift+10,-0.12) node[anchor=north] {0};
	\draw (\shift+11,-0.12) node[anchor=north] {1};
	\draw (\shift+16.75,-0.12) node[anchor=north west] {$N-1$};
	\draw (\shift+17.0,1.3) node[anchor=north west] {mean};
	\draw [dotted] (\shift+10.,0.9) -- (\shift+19.,0.9);
	\draw (\shift+13.9,-0.12) node[anchor=north west] {$k$};
	
	\draw [-] (\x + 0.,\down + 0.) -- (\x + 0.,\down + 2.);
	\draw [->] (\x + 0.,\down + 0.) -- (\x + 0.,\down + 1.);
	\draw [->] (\x + 1.,\down + 0.) -- (\x + 1.,\down + 1.);
	\draw [->] (\x + 2.,\down + 0.) -- (\x + 2.,\down + 1.);
	\draw [->] (\x + 3.,\down + 0.) -- (\x + 3.,\down + 1.);
	\draw [->] (\x + 4.,\down + 0.) -- (\x + 4.,\down + 3.0);
	\draw [->] (\x + 5.,\down + 0.) -- (\x + 5.,\down + 1.);
	\draw [->] (\x + 6.,\down + 0.) -- (\x + 6.,\down + 1.);
	\draw [->] (\x + 7.,\down + 0.) -- (\x + 7.,\down + 1.);
	\draw (\x + 0.,\down + 0.)-- (\x + 9.,\down + 0.);
	\draw (\x + -1,\down + 1.2) node[anchor=north] {$ \frac{1}{\sqrt{N}} $};
	\draw (\x + 0,\down -0.12) node[anchor=north] {0};
	\draw (\x + 1,\down -0.12) node[anchor=north] {1};
	\draw (\x + 6.75,\down -0.12) node[anchor=north west] {$N-1$};
	\draw (\x + 3.76,\down -0.12) node[anchor=north west] {$k$};
	\draw (\x + 4.1,\down + 3) node[anchor=north west] {$\frac{3}{\sqrt{N}}$};
	\end{tikzpicture}
	\caption {Amplitude amplification of the marked item resulting from one iteration  of Grover operator. } \label{Fig_Grover_operator_amplitude}
\end{figure}
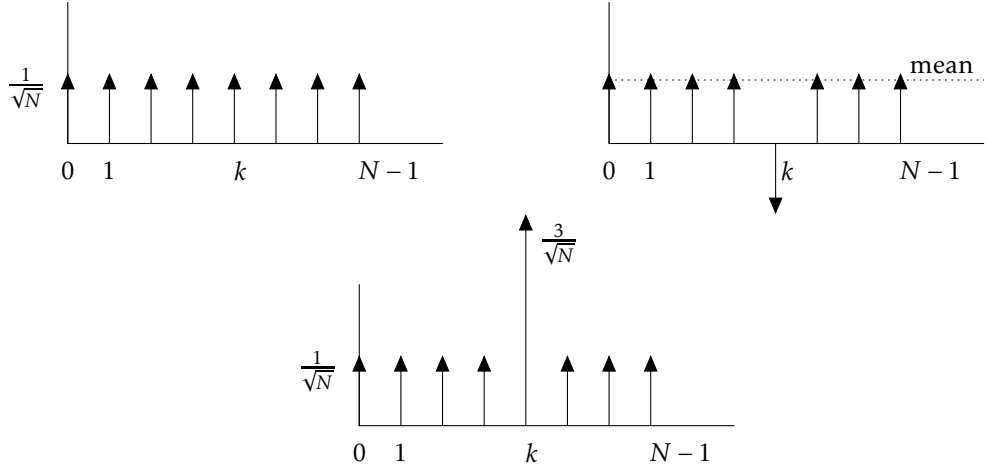

We note that if $ M=1 $, then the Grover operator $ G $ is applied $ R \approx \frac{\pi}{4} \sqrt{N} $ times. After $ R $ iteration of $ G $, the measurement of the system yields the marked item. To summarize, Grover's algorithm for the case $ M=1 $ is as follows.

\begin{algorithmic}[1]

	\STATE \textbf{Initialization:} Initialize the input $ n $ qubits as $ \ket{\psi} = \ket{0}^{\otimes n} $.
	
	\STATE \textbf{Iteration:} Apply $ H^{\otimes n} $ on $\ket{\psi} $.
	
	\STATE \textbf{Grover operator:} Apply Grover operator $ G $ a total of $R \approx \frac{\pi}{4} \sqrt{N}$ times.
	
	\STATE \textbf{Measurement:} Measure the $ n $ qubits and obtain candidate solution $x$.
	
\end{algorithmic}

We refer readers to \cite{boyer1998tight}, \cite{articleZalka} for a tight analysis of this algorithm.

\subsection{\textbf{Unknown number of marked items}}
We consider the case when the number of marked items is unknown. Suppose $ M $ items out of $ N $ are marked, with $ M $ is not known to the algorithm in advance. Let $ \alpha =  \frac{1}{\sqrt{M}} $ and $ \beta = \frac{1}{\sqrt{N-M}} $.
The initial state $ \ket{\psi}  $  is given by $$ \ket{\psi}  =  \sum_{x,\, x \text{ marked}} \, \alpha \ket{x} + \sum_{x,\, x \text{ not marked}} \, \beta \ket{x}  .$$
It can be verified that after  $ j $ iteration of the Grover operator $ G $, the initial state $ \ket{\psi} $  is transformed to 
\begin{align}
G^j\, \ket{\psi}  =  \sum_{x,\, x \text{ marked}} \, \alpha \sin\left((2j+1\right)\theta) \,\ket{x} + \sum_{x,\, x \text{ not marked}} \, \beta \cos\left((2j+1\right)\theta) \, \ket{x},
\end{align}\label{eq_Grover_iteration}
where
$\theta  = \sin^{-1}(\sqrt{\frac{M}{N}})$.

\section{Quantum grid search algorithm}\label{sect:grid_search}

We recall the search problem that we proposed to solve in \mref{sec:intro}. We consider $ k $ non-empty buckets labeled $ X_1,X_2,\cdots X_k $, with each bucket containing some marked and some unmarked items. Suppose one item is taken out blindfolded from each bucket at the same time. Then the search is considered successful is all items picked  are marked. Here, we consider an alternate formulation of the search problem, with the intention of applying this algorithm to solve trajectory optimization problems.
In this alternate formulation, our quantum grid search algorithm solves the problem of finding ``a marked path'' on a grid out of many such possible ``paths''. More explicitly, 
for $ i=1 $ to $ i=k$,  we define $ X_i $ as follows. Let $ X_i = \{x_{ij} \}$,  with $ j $ running from $ 1 $ to $ n_i $,  be a set of cardinality $ n_i $. We also assume that out of $ n_i $ elements, a fixed but unknown number of $ m_i $ elements are marked. A ``path'' is an element of the set $ X= \prod_{i=1}^{k} X_i $ and a path $ (y_1,y_2,\cdots y_k) $ will be called ``a marked path'' if $ y_i $ is a marked element in $ X_i $ for $ i=1,2,\cdots k $.

We  assume that, for $ i=1 $ to $ i=k $, local oracle function $ f_i $ is available such that $ f_i(t) =1 $, if $ t  $ is a marked item in $ X_i $, i.e., 
\begin{align}\label{Eq:localoracle}
f_i(t) = \begin{cases}
& 1  \qquad \text{if the element $t$ is a marked element in $X_i$, }\\
&  0 \qquad \text{otherwise.}
\end{cases}
\end{align}
The objective of the quantum grid search algorithm is to find a marked ``path'', i.e., to find a tuple \\ $ (y_1,y_2, \cdots y_k) $ with $ y_i \in X_i $ for $ i=1 $ to $ k $, such that 
$ f(y_1,y_2,\cdots y_k) = 1 $.

Using the local oracle functions defined above, one can obtain a global oracle function $ f  $ from $ X $ to $ \{0,1\} $ such that 
$ f(y_1,y_2,\cdots y_k) = 1 $ if and only if  $ y_i $ is a marked element in $X_i$ for $ i=1,2,\cdots k $; otherwise $ f(y_1,y_2,\cdots y_k) =0 $, i.e.,
\begin{align*}
f(y_1,y_2,\cdots y_k) = \begin{cases}
& 1  \qquad \text{if, for $ i=1$ to $ k $, the element $ y_i$ is a marked element in $X_i$, }\\
&  0 \qquad \text{otherwise.}
\end{cases}
\end{align*}

A schematic diagram of the quantum circuit for \textit{Algorithm \ref{alg:grid}}, for $ k=3 $, is given in \mfig{fig:circuit}. We note that essentially the main components of the circuit are $ k=3 $ Grover operators acting in parallel. 

\begin{figure} 
	\centerline{
		\begin{tikzpicture}[thick]
		%
		\tikzstyle{operator} = [draw,fill=white,minimum size=1.5em] 
		\tikzstyle{phase} = [fill,shape=circle,minimum size=3.5pt,inner sep=0pt]
		\tikzstyle{surround} = [fill=blue!10,thick,draw=black,rounded corners=2mm]
		%
		\node at (-1.5,0) (q11) {$ X_1 $};
		\node at (-1.5,-1) (q22) {$ X_2 $};
		\node at (-1.5,-2) (q33) {$ X_3 $};   
		\node at (-1.25,0) (q1) {$  $};
		\node at (-1.25,-1) (q2) {$  $};
		\node at (-1.25,-2) (q3) {$  $};    
		\node[operator] (op11) at (1,0) {G} edge [-] (q1);
		\node[operator] (op21) at (1,-1) {G} edge [-] (q2);
		\node[operator] (op31) at (1,-2) {G} edge [-] (q3);
		\node[phase] (phase11) at (3,0) {} edge [-] (op11);
		\node[phase] (phase12) at (3,-1) {} edge [-] (op21);
		\node[phase] (phase13) at (3,-2) {} edge [-] (op31);
		\node[phase] (phase110) at (-0.25,0) {} edge [-] (op11);
		\node[phase] (phase120) at (-0.25,-1) {} edge [-] (op21);
		\node[phase] (phase130) at (-0.25,-2) {} edge [-] (op31);
		\node[phase] (phase110) at (2.25,0) {} edge [-] (op11);
		\node[phase] (phase120) at (2.25,-1) {} edge [-] (op21);
		\node[phase] (phase130) at (2.25,-2) {} edge [-] (op31);
		\draw [thick] (3,0.25) -- (5,0.25) -- (5,-2.25) -- (3,-2.25)--(3,0.25); 
		\node at (4,-1) {$ f $};
		\draw [thick] (5,-1)--(6,-1);
		\node[phase,thick] (phase22) at (5,-1) {} edge [-] (5.5,-1);
		\node at (0,0) (q100) {$ $};
		\node at (2,0) (q200) {$ $};
		\begin{pgfonlayer}{background} 
		\node[surround] (background) [fit = (q100) (op11) (op21) (op31) (q200) ] {};
		\end{pgfonlayer}
		\end{tikzpicture}
	}
	\caption{
		Schematic of the quantum circuit for \textit{Algorithm \ref{alg:grid}}, for $ k=3 $. The blue block contains $ 3 $ Grover operators, one for each of finding marked elements in  $ X_1,X_2 $ and $ X_3 $. This block may be repeated several times (not shown in the figure) during the main iteration of Algorithm \ref{alg:grid}.
	} \label{fig:circuit}
\end{figure}
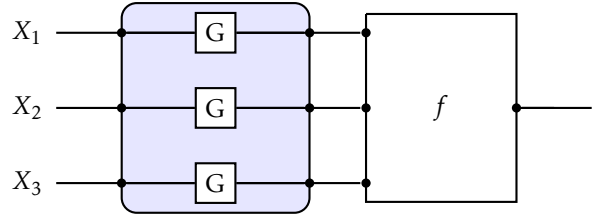

Now we briefly describe the working of the quantum grid search algorithm, Algorithm \ref{alg:grid}. At the heart of the algorithm is $ k $ Grover operator working in parallel. The local Grover operator $ G_i $ is set up to search for a marked item in $ X_i $ and it is implemented in the standard way using the local oracle function $ f_i $ defined in \meqref{Eq:localoracle}. The algorithm proceeds by initializing the values of parameter $ \lambda $ and the variable $ m $. The variable $ m $ is initially set to $ 1$,  and later in the algorithm it is adaptively scaled by $ \lambda $, by setting its value to $ \lambda m $, if the search for a state that satisfying the criterion set up by the function $ f $ is unsuccessful. The variable $ m $ effectively controls the number of  Grover's rotation that is performed, as  $ j $ gets selected randomly from the set $\{0,1,\cdots \ceil {m-1} \}  $, for $ i=1,2,\cdots k $. A priori, the number of marked elements in $ X_i $ is not known therefore $ m $ is adaptively scaled to ensure that the correct number of iterations of  Grover operator  is selected as required by Grover's search algorithm. 

\begin{algorithm}
	\caption{Quantum grid search with unknown number of solutions}
	{	Description: Let $ X_i = \{x_{ij} \}$,  with $ j $ running from $ 1 $ to $ n_i $,  be a set of cardinality $ n_i $. We also assume that out of $ n_i $ elements, a fixed but unknown number of $ m_i $ elements are marked. A ``path'', or equivalently an element $ (y_1,y_2,\cdots y_k) \in  X= \prod_{i=1}^{k} X_i  $, is understood to be marked, if, for $ i=1,2,\cdots k $, $ y_i $ is a marked element in $ X_i $. Suppose there exists a function $ f  $ from $ X= \prod_{i=1}^{k} X_i $ to $ \{0,1\} $ such that $ f(y_1,y_2,\cdots y_k) = 1 $ if and only if  $ y_i$ is a marked elements in $X_i$ for $ i=1,2,\cdots k $; otherwise $ f(y_1,y_2,\cdots y_k) =0 $.}
	\label{alg:grid}
	\begin{algorithmic}[1]
		\REQUIRE The set $ X$ and an oracle function $f$ with $ f(y_1,y_2,\cdots y_k) = 1 $ if and only if  $ y_i$ is a marked element in $X_i$ for $ i=1,2,\cdots k $; otherwise $ f(y_1,y_2,\cdots y_k) =0 $. 
		
		\ENSURE $x\in X$, a marked element. 
		
		\STATE Initialize $m \gets 1$ and  $ \lambda \gets 1 + \frac{1}{2}\left(\frac{4^k}{(4^k-1)} -1\right) $. (Any $ \lambda $, such that $ 1 < \lambda < \frac{4^k}{4^k -1 }  $, will work.)
		
		\FOR {$i = 1$ to $ k $}
		
		\IF{$m \leq \sqrt n_i$}
		
		\STATE pick uniformly random $j \gets \{0,\dots, \ceil{m-1}\}$.
		
		\STATE apply $j$ times the basic Grover iteration $G_i$ on initial state
		$\ket{\psi_i} = \sum_{x\in X_i}\frac{1}{\sqrt n_i} \ket{x}$.
		
		\ENDIF
		
		\ENDFOR
		
		\STATE Measure and obtain $f(y_1,y_2,\cdots y_k)$. If the outcome is $ 1 $, then output ``Success, $(y_1,y_2,\cdots y_k)$'' and
		exit. Otherwise, set $m \gets \lambda m$ and go to Step $ 2 $.

	\end{algorithmic}
\end{algorithm}

\begin{lem}
	Let $ m $ be a positive integer such that $$ m > \max\left(\frac{1}{\sin (2 \theta_1)},\frac{1}{\sin (2 \theta_1)},\cdots \frac{1}{\sin (2 \theta_k)} \right), \text{ where } \theta_i = \sin^{-1} \left(\sqrt{\frac{m_i}{n_i}} \right). $$
	Let $ \ds P_m = \text{the average probability of success of Algorithm \ref{alg:grid}} $, during the iteration wherein Grover operator $ G_i $ is applied $ j $ times, and $ j $ is chosen randomly from the set $\{0,1,\cdots \ceil {m-1} \}  $, for $ i=1,2,\cdots k $. 
	Then  $$ \ds  P_m  \geq \frac{1}{4^k}. $$
\end{lem}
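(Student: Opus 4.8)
The plan is to use the fact that the $k$ Grover operators $G_1,\dots,G_k$ in Algorithm~\ref{alg:grid} act on disjoint quantum registers, so that the measurement outcomes in the $k$ buckets are mutually independent and the overall probability of picking a marked path factorizes over $i=1,\dots,k$. Since $m$ is assumed to be a positive integer, $\ceil{m-1}=m-1$, so each iteration count $j_i$ is drawn uniformly from the $m$-element set $\{0,1,\dots,m-1\}$.

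First I would fix a tuple $(j_1,\dots,j_k)$ of iteration counts. For a single bucket $X_i$ containing $m_i$ marked items out of $n_i$, applying $G_i$ exactly $j_i$ times to the uniform superposition $\ket{\psi_i}$ produces, by the formula recalled in \mref{sect:Quantum}, a state in which the total weight on marked basis vectors is $\sin^2\big((2j_i+1)\theta_i\big)$, with $\theta_i=\sin^{-1}\!\sqrt{m_i/n_i}$; this is exactly the probability that the element measured from $X_i$ is marked. By independence across buckets, the probability that $f(y_1,\dots,y_k)=1$ for this fixed tuple is $\prod_{i=1}^k\sin^2\big((2j_i+1)\theta_i\big)$. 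Averaging over the independent uniform choices of the $j_i$ then separates into a product of one-bucket averages:
$$P_m=\prod_{i=1}^{k}\left(\frac{1}{m}\sum_{j=0}^{m-1}\sin^2\big((2j+1)\theta_i\big)\right).$$

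Next I would evaluate the inner sum by the standard identity used in the Boyer--Brassard--H{\o}yer--Tapp analysis: writing $\sin^2 x=\tfrac12(1-\cos 2x)$ and summing the resulting geometric series of cosines gives
$$\sum_{j=0}^{m-1}\sin^2\big((2j+1)\theta\big)=\frac{m}{2}-\frac{\sin(4m\theta)}{4\sin(2\theta)},$$
hence $\dfrac{1}{m}\sum_{j=0}^{m-1}\sin^2\big((2j+1)\theta_i\big)=\dfrac12-\dfrac{\sin(4m\theta_i)}{4m\sin(2\theta_i)}$. The hypothesis $m>1/\sin(2\theta_i)$ forces the magnitude of the correction term to be strictly less than $1/4$, so each factor exceeds $1/2-1/4=1/4$; multiplying the $k$ factors yields $P_m>1/4^k$, which is even slightly stronger than the claimed bound.

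I do not expect a genuine obstacle. The only points needing care are verifying the closed form for the partial sum of $\sin^2$ (a routine geometric-series computation) and checking that the index set really has $m$ elements so that the averaging denominator is $m$ — guaranteed by the integrality of $m$, which also makes the ceiling in the algorithm harmless. The conceptual heart is simply recognizing that the ``average probability of success'' is the average over the joint uniform distribution of $(j_1,\dots,j_k)$, which is what produces the clean product structure and reduces the whole statement to the one-register bound $\frac1m\sum_{j=0}^{m-1}\sin^2((2j+1)\theta_i)>\frac14$.
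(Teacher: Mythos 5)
Your proposal is correct and follows essentially the same route as the paper: factorize $P_m$ over the $k$ independent registers, evaluate each one-bucket average via the identity $\frac{1}{m}\sum_{j=0}^{m-1}\sin^2\left((2j+1)\theta_i\right)=\frac12-\frac{\sin(4m\theta_i)}{4m\sin(2\theta_i)}$, and use $m>1/\sin(2\theta_i)$ to bound each factor below by $1/4$. Your observation that the inequality is in fact strict is a minor (and correct) refinement of the paper's stated $P_m\geq 1/4^k$.
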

\begin{proof}	
	From \meqref{eq_Grover_iteration}, the probability of finding a marked element in $ X_i$ after $j$ iterations is given by 
	\begin{equation*}
	p_{i,j}:=  \sin^2((2j+1)\theta_i) \, .
	\end{equation*}
	We note that the average probability of success, when the Grover operator $ G_i $ is applied $ j $ times and $ j $ is chosen randomly from the set $\{0,1,\cdots \ceil {m-1} \}  $, is given by
	\begin{align*}
	P_m & = \dfrac{1}{m^k} \sum_{j_1,j_2,\cdots j_k=0}^{m-1}  \prod_{i=1}^{k} p_{i,j_i} \\
	& = \dfrac{1}{m^k} \sum_{j_1,j_2,\cdots j_k=0}^{m-1}  \prod_{i=1}^{k} \sin^2((2j_i+1)\theta_i).\\
	& = \prod_{i=1}^{k} \dfrac{1}{m} \sum_{j_i=0}^{m-1} \sin^2((2j_i+1)\theta_i) \\
	& = \prod_{i=1}^{k} \dfrac{1}{2m} \sum_{j_i=0}^{m-1} 1 - \cos((2j_i+1)2\theta_i) \\
	& = \prod_{i=1}^{k} \left(\dfrac{1}{2} - \frac{\sin(4m \theta_i)}{4 m \sin (2 \theta_i) } \right).
	\end{align*}
	In the last step we have used the trigonometric identity
	\[
	\sum_{j=0}^{m-1} \, 1 - \cos \left( (2j+1 ) \theta\right) = m - \frac{1}{2} \frac{\sin 2 m \theta}{\sin \theta}.
	\]
	Since, $$ m > \max\left(\frac{1}{\sin (2 \theta_1)},\frac{1}{\sin (2 \theta_1)},\cdots \frac{1}{\sin (2 \theta_k)} \right),$$ we have  
	$$ \frac{\sin(4m \theta_i)}{4 m \sin (2 \theta_i) } \leq \frac{1}{4 m \sin (2 \theta_i) } \leq \frac{1}{4}.$$  Therefore, $$ \dfrac{1}{2} - \frac{\sin(4m \theta_i)}{4 m \sin (2 \theta_i) }  \geq \frac{1}{4}, \qquad \text{for $ i=1,2\cdots k $.}$$ 
	Hence, the average probability of success $\ds P_m \geq \frac{1}{4^k} $.
\end{proof}

\begin{thm} \label{thm:main}
	Algorithm $\ref{alg:grid}$ finds success within the expected running time of the order of $ \OO\left(\max\left(\sqrt{\frac{n_i}{m_i}}\right)_{i=1}^{i=k} \right) $.
\end{thm}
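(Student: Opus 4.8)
The plan is to run the standard Crammond--Boyer--Brassard--Høyer--Tapp amplitude-amplification argument for unknown $M$, but tracking the product structure over the $k$ buckets and using the lemma above as the per-stage success bound. First I would fix notation: let $M_i := \max_i \bigl(1/\sin(2\theta_i)\bigr)$ over all buckets, and set $m_\star := \max_i 1/\sin(2\theta_i)$; since $\sin\theta_i = \sqrt{m_i/n_i}$ and $\theta_i$ is small when $m_i/n_i$ is small, one has $1/\sin(2\theta_i) = \Theta\!\bigl(\sqrt{n_i/m_i}\bigr)$, so $m_\star = \Theta\!\bigl(\max_i \sqrt{n_i/m_i}\bigr)$. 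The key point is that once the control variable $m$ in Algorithm~\ref{alg:grid} exceeds $m_\star$, the Lemma guarantees the average success probability of that stage is at least $4^{-k}$, a constant (for fixed $k$).

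The main body of the argument is the geometric-series bookkeeping. The algorithm starts with $m=1$ and multiplies $m$ by $\lambda$ after each failed stage, where $1 < \lambda < 4^k/(4^k-1)$. I would split the run into two phases. In the ``ramp-up'' phase ($m \le m_\star$), stages simply fail (or succeed early, which only helps); the number of such stages is at most $\lceil \log_\lambda m_\star \rceil$, and the total work they contribute is dominated by a geometric sum $\sum j \approx \sum m \le \sum_{t\ge 0}\lambda^{-t} m_\star \cdot(\text{const}) = O(m_\star)$, since the per-stage cost (number of Grover iterations, summed over the $k$ parallel operators, is $O(k \cdot m)$ with $m$ the current value and $k$ fixed). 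In the ``harvest'' phase ($m > m_\star$), each stage succeeds independently with probability $\ge 4^{-k}$, so the number of stages until success is stochastically dominated by a geometric random variable with that parameter; its expectation is $\le 4^k$, a constant. The cost of the $r$-th harvest stage is $O(k\,\lambda^{r} m_\star)$, and because $\lambda < 4^k/(4^k-1)$ the expected cost $\sum_{r\ge 0} (1-4^{-k})^{r} \cdot O(\lambda^{r} m_\star)$ is a convergent geometric series — this is exactly why the constraint on $\lambda$ is chosen — so the total expected harvest cost is $O(m_\star)$ as well. Adding the two phases gives expected running time $O(m_\star) = O\!\bigl(\max_i \sqrt{n_i/m_i}\bigr)$.

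A couple of details I would be careful about. One is the subtle point (inherited from Boyer--Brassard--Høyer--Tapp) that the Lemma's hypothesis requires $m$ to be an integer exceeding $m_\star$; since $m$ is scaled multiplicatively from $1$ by $\lambda$, it is generally not an integer, so I would either invoke the Lemma with $\lceil m\rceil$ (noting $j$ is drawn from $\{0,\dots,\lceil m-1\rceil\}$ anyway) or restate the Lemma for real $m \ge m_\star+1$ via the same computation. The second is the convergence of $\sum_r (1-4^{-k})^r \lambda^r$: this needs $\lambda(1-4^{-k}) < 1$, i.e. $\lambda < 4^k/(4^k-1)$, which is precisely the stated range for $\lambda$, so the expected-cost sum converges and the $O(m_\star)$ bound holds with a constant depending only on $k$ and $\lambda$.

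I expect the main obstacle to be presenting the expected-cost computation cleanly: one must combine a \emph{random} number of stages with \emph{deterministically growing} per-stage cost, and argue that the product of ``failure probability to the $r$'' and ``cost $\sim \lambda^r$'' still sums to a constant multiple of $m_\star$. This is where the precise bound on $\lambda$ does its work, and it is the only place the argument is more than routine; everything else is either the Lemma (already proved) or standard geometric-series estimates, together with the elementary asymptotic $1/\sin(2\theta_i) = \Theta(\sqrt{n_i/m_i})$.
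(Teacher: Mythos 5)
Your proposal is correct and follows essentially the same route as the paper's proof: the two-phase (ramp-up/critical-stage) decomposition, the geometric-series bound on the pre-critical cost, the per-stage success probability $\ge 4^{-k}$ from the Lemma, and the convergence of $\sum_r (1-4^{-k})^r\lambda^r$ guaranteed by $\lambda < 4^k/(4^k-1)$. The only point you gloss over is that $1/\sin(2\theta_i)=\frac{n_i}{2\sqrt{(n_i-m_i)m_i}}\le\sqrt{n_i/m_i}$ requires $m_i\le 0.75\,n_i$; the paper assumes this and dispatches the remaining case $m_i>0.75\,n_i$ by classical sampling in constant expected time.
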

\begin{proof}
	First, we assume $ m_i \leq \, 0.75 n_i $, for $ i=1,2,\cdots k $.
	Let $$ \alpha_{i} = \frac{1}{\sin (2 \theta_i)} = \frac{n_i}{2 \sqrt{(n_i-m_i) m_i}} \le \sqrt{\frac{n_i}{m_i}}. $$ 
	The algorithm is said to be in the critical stage, if it
	goes through the main loop more than $ \ceil {\log_{\lambda} \alpha^{*}} $ times, where $ \alpha^{*} = \max(\alpha_1,\alpha_2,\cdots \alpha_k) $.
	
	During the $ i^{\text{th}} $ iteration of the main loop in Algorithm $\ref{alg:grid}$, the value of $ m $ is $ \lambda^{i-1} $. As Grover operator $ G_i $ is applied $ j $ times, and $ j $ is chosen randomly from the set $\{0,1,\cdots \ceil {m-1} \}  $, on the average the Grover operator $ G_i $ will run less than $ \frac{\lambda^{i-1}}{2} $ times during $ i^{\text{th}} $ iteration of the main loop. Also, there are $ k $ Grover operators are running in parallel. Therefore, an upper bound on the expected total number of Grover iterations needed to reach the critical stage, if the algorithm ever reaches it, is given by
	\begin{align}
	\frac{k}{2} \sum_{i=1}^{\ceil {\log_{\lambda} \alpha^{*}} } \lambda^{i-1} < \frac{k}{2} \frac{\lambda}{\lambda -1 } \alpha^{*} .
	\end{align}
	
	Therefore, the running time of the algorithm will be of $ \OO(\alpha^{*}) $, if it succeeds before reaching the critical stage.	
	On the other hand, if the critical stage is reached, every iteration thereafter will have a probability of success bigger than $ \frac{1}{4^k} $. Therefore,
	an upper bound on the expected number of Grover iterations needed for the algorithm to succeed, if the critical stage is reached, is given by
	\begin{align}
	\frac{k}{2} \sum_{s=0}^{\infty}  \left(1-\frac{1}{4^k}\right)^s \left(\frac{1}{4^k}\right) \lambda^{s + \ceil {\log_{\lambda} \alpha^{*}} } & < \frac{k \, \alpha^{*}}{2^{2k+1}} \sum_{s=0}^{\infty}  \left(1-\frac{1}{4^k}\right)^s  \lambda^{s +1  }  \nonumber \\
	& <  \left(\frac{k \lambda}{2^{2k+1}(1- (1-2^{-2k}) \lambda) }\right) \alpha^{*}.
	\end{align}
	It is clear that the run time of the algorithm is of the order of $ \OO(\alpha^{*}) $, as desired. The remaining case, when $ m_i > 0.75 \, n_i $, can be easily dealt with using a classical sampling in a constant expected time.
\end{proof}

\section{Trajectory optimization: Problem formulation} \label{sect:problem formulation}
We describe below the problem formulation given in \cite{shuklavedula2019}.
Suppose $ \vec{U}(t) \in \RR^m  $ and $ \vec{X}(t) \in \RR^n  $ denote the control function  and the corresponding state trajectory respectively at time $ t $. The goal is to determine the optimal control function $ \vec{U}(t)$ and the corresponding state trajectory $ \vec{X}(t)$ for $ \tau_0 \leq t \leq \tau_f $ such that the following Bolza cost function is minimized:
\begin{align}\label{Eq_Main_Cost}
\mathcal{J}(\vec{U}(.),\vec{X}(.),\tau_f) = \mathcal{M}(\vec{X}(\tau_f),\tau_f) + \int_{\tau_0}^{\tau_f} \, \mathcal{L}(\vec{U}(\tau),\vec{X}(t),t) \, dt.
\end{align}
Here $ \mathcal{M} $ and $ \mathcal{L} $ are $ \RR $-valued functions. Moreover, we also assume that the system satisfies the following dynamic constraints. 
\begin{align}
f_l \leq f(\vec{U}(\tau),\vec{X}(t),\vec{X^'}(t),t) \leq f_u \qquad t \in [\tau_0, \tau_f].
\end{align}
In addition, we also specify the following boundary conditions 
\begin{align}
h_l  \leq h(\vec{X}(\tau_0),\vec{X}(\tau_f),\tau_f -\tau_0) \leq h_u,
\end{align}
where $ h $ an $ \RR^p $-valued function and $ h_l, h_u \in \RR^p$ are constant vectors providing the lower and upper bounds of $ h $. Finally, we note the mixed constraints on control and state variables
\begin{align}
g_l \leq g(\vec{U}(t),\vec{X}(t),t) \leq g_u ,
\end{align}
with $ g $ a $ \RR^r $-valued function and $ g_l, g_u \in \RR^r$ are constant vectors providing the lower and upper bounds of $ g $.

\subsection{Discretization approaches} \label{Discretization}
We refer readers to \cite{shuklavedula2019} for our discretization approaches.
We note that with this discretization the trajectory optimization problem is transformed into a discrete search problem 
and where the objective function to be minimized is the discrete form of  \meqref{Eq_Main_Cost}.

In principle, one can now solve the problem by traversing the discrete search space and finding the state that corresponds to the minimum cost and satisfies all the imposed constraints. Still, in practice, the problem is the massive size of the discrete search space. The quantum grid search algorithm offers a promising method to tackle such problems. 

\section{Application of quantum grid search algorithm in trajectory optimization} \label{sect:application}
The quantum grid search algorithm (Algorithm~\ref{alg:grid}) described earlier, is applicable in the general context of the problem of finding the minimum cost for the problem described in \mref{Discretization}, with the cost function being the discrete version of \meqref{Eq_Main_Cost}. We explain in the following how the quantum grid search algorithm can be used to solve a trajectory optimization problem under some mild assumptions. First, we will consider a two-dimensional problem and then subsequent we will describe how a general $ d$-dimensional problem, with $ d >2  $ can be tackled using our  algorithm. For illustrating the key idea we will first consider the brachistochrone problem, with the objective of minimizing the time $ \tau_f $ given by a discrete version of \meqref{eq_cost}, and then give the most general case.

Before proceeding further, we  define precisely some relevant terms. The definitions given below are applicable for any two-dimensional trajectory optimization problem. However, it makes more intuitive sense in the context of the brachistochrone problem described in \mref{subsect:brachistochrone}.
\begin{defn}[Path] \label{def:path}
	An element $ (y_1,y_2,\cdots y_k) \in \prod_{i=1}^{k} X_i $ is called a path.
\end{defn}

\begin{defn}[Cost]
	$ \Cost $ is a real valued function $ \{\Cost \ | \ \prod_{i=1}^{k} X_i \to \RR \} $, such that $ \Cost\left((y_1,y_2,\cdots y_k)\right) $ gives the cost  associated to the path $ (y_1,y_2,\cdots y_k) $ (or more precisely, for the continuous path $ y(x) $ obtained via interpolation from the path $ (y_1,y_2,\cdots y_k) )$. For example, for the brachistochrone problem the cost is calculated using \meqref{eq_cost}.
\end{defn}
We remark here that although we are using the `term' cost, in general it could be any quantity of interest, such as in the context of the brachistochrone problem it is basically the `time'.

\begin{defn}[$\SolutionPaths  (a,b) $]
	A path $ (y_1,y_2,\cdots y_k) \in \prod_{i=1}^{k} X_i $ belongs to the set $\SolutionPaths  (a,b) $, if $ a < \Cost\left((y_1,y_2,\cdots y_k)\right) < b $. In other words,
	\begin{align}
	\SolutionPaths (a,b) = \left\{(y_1,y_2,\cdots y_k) \in \prod_{i=1}^{k} X_i \,  \ | \ \,  a < \Cost\left((y_1,y_2,\cdots y_k)\right) <  b  \right\}.
	\end{align}
\end{defn}
\begin{defn}[Marked Element]
	An element 	$ x_{i,j} \in X_i $ is considered a marked element of $ X_i $ with respect to $\SolutionPaths (a,b) $, if there exist a path say $ (y_1,y_2,\cdots y_k) \in \SolutionPaths (a,b)$, such that $ y_i = x_{i,j} $.
\end{defn}


\subsection{\textbf{Brachistochrone problem}} \label{subsect:brachistochrone}
The objective in the brachistochrone problem is to find the required trajectory of a particle starting from the rest under the influence of gravity, without any friction, such that it slides from the one fixed point to the other in the least possible time.
Let $ \vec{X}(t) = x(t) \vec{i} + y(t) \vec{j} $ be the position of the particle. The goal is to minimize $ \tau_f $ given by
\begin{align} \label{eq_cost}
\tau_f = \int_{0}^{\pi} \, \sqrt{\frac{1 +(\frac{dy}{dx})^2 }{2gy}} \, dx,
\end{align}
under the boundary conditions  
$ (x(\tau_0),y(\tau_0)) = (0,2) $ and $ (x(\tau_f),y(\tau_f)) = (\pi,0) $ with $ \tau_0 = 0 $.
We describe a possible discretization approach to solve the brachistochrone problem.

\subsubsection{\textbf{Global discretization in physical space}} \label{sub_sect_discrete_physical_space}

Out of infinitely many possible paths, the goal is to pick the path with the minimum time. For practical purposes it is sufficient to find a `good enough' approximate solution (based on acceptable levels of errors).
The discretization is chosen based on what constitute  a `good enough' solution in a given context. In our present example, the physical space consists of the rectangle $ [0,\pi] \times [0,2] $ in $ \RR^2 $. Of course, one may as well consider a bigger rectangle to allow for the possibility of better solutions. One can now discretize the rectangle $ [0,\pi] \times [0,2] $ in several possible ways. For example, let $ x_i = \frac{ \pi i}{k} $ for $i = 1 \cdots k $. Let $ y_{i} \in \{ x_{i,j} = \frac{2j}{n_i} :\: j = 0,1 \cdots n_i \} $ for $ i=1 $ to $ n_i $. We set $ y_1 =2 $  and $ y_k = 0 $ to account for the boundary condition. 

Essentially, it means that once a discretization of $ x $ is carried out then we require that at $x= x_i $, the corresponding $ y(x_i) $ can only have values from the set $ \{ x_{i,j} :\: j = 0,1 \cdots n_i \} $ (see \mfig{Fig_grid}). 
We remark here that the definition of ``path'' given in Def.~\ref{def:path} makes sense, as given $ (y_1,y_2,\cdots y_k) \in \prod_{i=1}^{k} X_i $, one can use Lagrange interpolation to find a continuous path $ y(x) $ such that $ y(x_i) = y_i $, for $ i=1,2,\cdots k $, and $ y(x) $ is then a candidate solution of the brachistochrone problem under consideration. We note that with the above discretization there are $ N = \prod_{i=1}^{k} n_i $  total possible cases need to be considered. The minimum time can now be obtained by considering all the paths in this search space.

\begin{figure}
	\centering	
	\begin{tikzpicture} [scale=0.75]
	\draw [step=1cm,blue!40,thin] (0,0) grid (9,10); 
	\foreach \k in {1,2,3}
	\node  at (\k-1,-0.5) {$ x_{\k} $};
	\node  at (3,-0.5)  {$ \cdots  $};
	\node  at (4,-0.5)  {$ \cdots  $};
	\node  at (5,-0.5)  {$ x_i  $};
	\node  at (6,-0.5)  {$ x_{i+1}  $};
	\node  at (7,-0.5)  {$ \cdots  $};
	\node  at (8,-0.5)  {$ \cdots  $};
	\node  at (9,-0.5)  {$ x_k  $};
	\foreach \k in {2,3,4,5}{
		\pgfmathtruncatemacro \y {2*\k};
		\node  at (5.5,\k-0.75) {{\small $ x_{i,\k}$}};}
	\node [rotate=90] at (5.5,5)  {$ \cdots  $};
	\node  at (5.5,6.25)  {{\small $x_{i,i}    $}};
	\node  at (5.5,7.25)  {{\small $ x_{i,i+1}  $}};
	\node [rotate=90] at (5.5,8)  {$ \cdots  $};
	\node at (5.5,9.25) {{\small $ x_{i,n_i-1}  $}};
	\node at (5.5,10.25) {{\small $ x_{i,n_i}$}};
	\node at (5.5,0.25) {{\small $ x_{i,1} $}};
	\foreach \i in {0,...,10}{
		\draw [color=black, very thick] (5,\i) circle (1.5pt) ;
	}
	\end{tikzpicture}
	\caption{Brachistochrone problem: discretization of both the dependent variable $ y $ and the independent variable $ x $ is carried out in our approach. Note that at $ x=x_i $, the corresponding $ y(x_i) $ is allowed to take values only from the set  $ \{ x_{i,j} :\: j = 0,1 \cdots n_i \} $. The possible values for $ y(x_i) $ are shown with thick black dots.} \label{Fig_grid}
\end{figure}
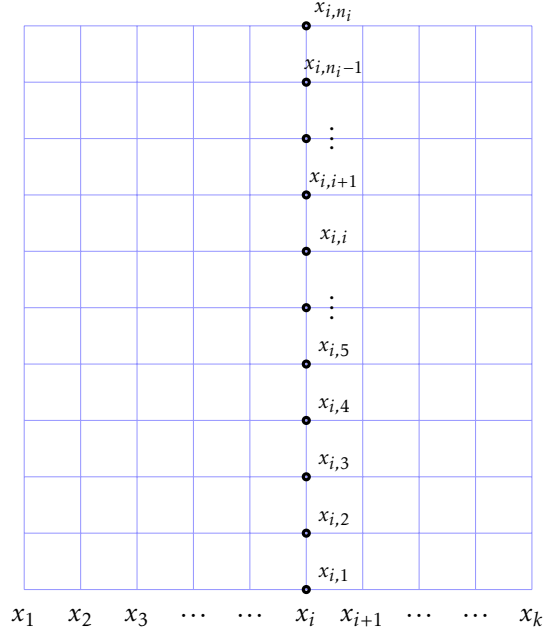

If needed, one can always approach the correct solution to the desired accuracy by making the discrete grid finer. However, the discrete search space grows very rapidly, therefore,  an efficient algorithm, such as the quantum grid search algorithm described in Sect.~\ref{sect:grid_search}, is needed to solve the problem.

\subsection{Solution of Brachistochrone problem}
In the following we give Algorithm~$\ref{alg:binary} $, which uses  Algorithm~$ \ref{alg:grid}$, and the idea of a classical binary search, to successively improve the bounds $ (a,b) $ such that $\SolutionPaths  (a,b) $ is non-empty.

We assume that an upper bound $ b_0 $ for the minimum time is known. For example, this can be achieved by carrying out a classical measurement after selecting a random path $ (y_1,y_2,\cdots y_k) \in \prod_{i=1}^k X_i$ and calculating the associated cost $ \Cost\left((y_1,y_2,\cdots y_k)\right) $. Then $ b_0 = \Cost\left((y_1,y_2,\cdots y_k)\right) $. We also require a choice for the desired possible lower bound on the minimum time $ a_0 $. Of course, $ a_0$ could just be set to $ 0 $.  Next, we define oracle function $ f_{a,b} $ as follows
\begin{align*}
f_{a,b}(y_1,y_2,\cdots y_k) = \begin{cases}
& 1  \qquad \text{if, $ a < \Cost\left((y_1,y_2,\cdots y_k)\right) < b  $,}\\
&  0 \qquad \text{otherwise.}
\end{cases}
\end{align*}

\begin{algorithm}
		\caption{A binary search algorithm based on  Algorithm~$ \ref{alg:grid}$ }
			\label{alg:binary}
	\begin{algorithmic}[1]
			
		\STATE Initialize $a=a_0$, $ b=b_0 $, count $=0 $, c $ = $ max-count .
		
		
		\WHILE{count $ < $ c }
		
		\STATE Set $ m = \frac{a+b}{2} $ and count $=$ count $ + 1$.

		\IF {Algorithm~$ \ref{alg:grid} $ called with the oracle function $ f_{a,m} $ is successful}
		
		\STATE \qquad $ b = m $
		
		\ELSIF {Algorithm~$ \ref{alg:grid} $ called with the oracle function $ f_{m,b} $ is successful}
		\STATE \qquad $ a = m $
		
		\ELSE 
		
		\STATE  Exit with output $ (a,b) $ 
		
		\ENDIF
		
		\ENDWHILE
		
	    \STATE	Exit with output $ (a,b) $ 
		
	\end{algorithmic} 
\end{algorithm}

\vspace{1 cm}
\paragraph{Advantages:} We now discuss the advantages offered by Algorithm~$\ref{alg:grid} $ and Algorithm~$\ref{alg:binary} $. First of all, we note that the set $ X = \prod_{i=1}^{k} n_i $ has the cardinality of $ \prod_{i=1}^{k} n_i $, i.e., there are possible $\prod_{i=1}^{k} n_i  $ number of paths. For ease of analysis assume that $ n_i = N $ for all $ i$. Then there are total $ N^k $ paths, with each path having an associated cost (or time in this case). The goal here is to find the path with the minimum associated cost, out of these $ N^k $ paths. A classical algorithm will clearly need $ \OO(N^K) $ searches in the worst case. If we use any known minimization algorithms  based on Grover's search algorithm directly, treating each path as an input, then with a quadratic speed-up the quantum algorithm will succeed with probability close to $ 1 $ within $ \OO(\sqrt{N^k}) $. Now we consider our quantum grid search algorithm, i.e., Algorithm~$\ref{alg:grid} $. Clearly in the situation under consideration, in the worst case the running time of Algorithm~$\ref{alg:grid} $ is of the order of  $ \OO(\sqrt{N})  $. Finally, Algorithm~$\ref{alg:binary} $ calls Algorithm~$\ref{alg:grid} $, c number of times. Therefore, Algorithm~$\ref{alg:binary} $ will succeed in $ c \, \OO(\sqrt{N}) $. Of course, Algorithm~$\ref{alg:binary} $ does not actually find the minimum solution, it just gives the bound $ (a,b) $ on the solution, and given $ (a,b) $ a direct application of Algorithm~$\ref{alg:grid} $ with the Oracle function $ f_{a,b} $, a path in $\SolutionPaths  (a,b) $ can be discovered easily. However, it should be clear that in many situations Algorithm~$\ref{alg:grid} $ and Algorithm~$\ref{alg:binary} $ offer great computational advantages over the classical methods. In fact, for some practical applications, if one is not looking for an absolute minimum, but just a `good enough' solution path with the associated  cost less than a predetermined value, then Algorithm~$\ref{alg:binary} $ will be even more efficient in discovering such a solution.

\subsection{Generalization to higher-dimensional problems}
Now we consider how  Algorithm~$\ref{alg:grid} $ and Algorithm~$\ref{alg:binary} $ could be employed to treat higher dimensional optimization problems.
Suppose the discretization is carried out as described in \cite{shuklavedula2019}. This will result in a discrete search space (similar to the discrete grid described earlier for the brachistochrone problem, but possibly of a higher dimension with a large value of $ k $ and with the role of $ x $ variable being played by the time variable $ t $). To make this point clear we revisit the two dimensional trajectory, say $ y=f(x) $, used in the brachistochrone problem. One of the key features in our discretization scheme is that we discretize both the independent variable $ x $ as well as the dependent variable $ y $. Which means we take samples at $ x=x_i $ for $ i=1,2,\cdots k $, and for each $ x_i $ then the corresponding $ y $-coordinate, i.e.,  $ y_i = f(x_i) $, can only take values from a finite set. We refer to \mfig{Fig_grid}  which illustrates this and wherein $ y_i = f(x_i) $ can only take values from the set $ X_i = \{ x_{i,j} :\: j = 0,1 \cdots n_i \} $. Clearly, each choice of an element in $ (y_1,y_2,\cdots y_k) \in \prod_{i=1}^{k} {X_i} $ defines a ``path'' (one can use Lagrange interpolation using the points $ (x_i,y_i) $ for $ i=1,2,\cdots k $ to construct the continuous path $ y=f(x) $ passing through these points). Now consider motion in the $ 3 $-dimensional space. Of course, mathematically a curve with the parametric equation  $ r(t) = x(t) \vec{i} + y(t) \vec{j} + z(t) \vec{k} = \langle x(t),y(t),z(t) \rangle $ for $ t \in [t_0,t_f] $ can describe this motion. Now we discretize the parameter $ t $ by taking samples at $ t = t_i $ for $ i=1,2,\cdots k $, and for each $ t_i $ then the corresponding $ x,y$ and $z$-coordinates, i.e.,  $ x_i = x(t_i),y_i=y(t_i) $ and $ z=z(t_i) $ can only take values from finite sets, $ X_i,Y_i$ and $ Z_i $, respectively. Clearly this means the motion of the particle is restricted on a  $ 3 $-dimensional discrete grid. Now the concept of ``path'' and the ``marked path'' can easily be defined. Therefore, essentially the problem formulation in this case will be quite similar to the formulation in the brachistochrone problem described  earlier. Indeed, it is clear that in order to implement our quantum grid search algorithm , i.e., Algorithm~$\ref{alg:grid} $, in this case the set $ X $ therein should be replaced by $ \prod_{i=1}^{k} X_i Y_i Z_i $, with each element in  $ \prod_{i=1}^{k} X_i Y_i Z_i $  defining a ``path''. Then for each of $ X_i ,Y_i$ and $ Z_i $ for $ i=1,2,\cdots k $ a Grover operator will be required for implementing Algorithm~$\ref{alg:grid} $. 

From the previous discussion, it should be now clear that this method could be generalized, similarly, to even higher dimensional trajectory optimization problems.  Indeed, for each discretized state and control variable at chosen sampling points, a Grover operator will be needed, and all such Grover operator will work in parallel. For example, suppose $ X_i $, for $ i=1 $ to $ p $, are $ p $ state variables, and $ Y_i $, for $ i= 1$ to $ q $, are $ q $ control variables. In this case, in Algorithm~$\ref{alg:grid} $, the set $ X $ should be replaced by $ \prod_{i=1}^{k} \prod_{r=1}^p  \prod_{s=1}^{q} {X_r}_i {Y_s}_i $, with each element in $ \prod_{i=1}^{k} \prod_{r=1}^p  \prod_{s=1}^{q} {X_r}_i {Y_s}_i $  defining a ``path''. Therefore, we conclude that Algorithm~$\ref{alg:grid} $ is applicable to higher dimensional trajectory optimization problems.

Here we want to point out a limitation to our proposed approach. It is obvious that for a high dimensional problem, with a very large value of $ k $, the number of Grover operators needed will be very big. This means the number of components and logic gates needed for the quantum circuit implementing  Algorithm~$\ref{alg:grid} $ will be very high. In a sense the gain in computational speed is limited by the increased complexity of the quantum circuit. 
Still, even with this limitation many optimizations problems will be amenable to being treated by  Algorithm~$\ref{alg:grid} $ and Algorithm~$\ref{alg:binary} $.

\section{\textbf{Conclusion}} \label{sect:Conclusion}

In this paper, we proposed two quantum search algorithms, Algorithm~$\ref{alg:grid} $ and Algorithm~$\ref{alg:binary} $. The first algorithm, Algorithm~$\ref{alg:grid} $, addresses a class of special search problems involving identification of ``marked elements'' associated with specified bounds on a discrete grid. This class of problems is relevant to many fields, including global trajectory optimization. The second algorithm, Algorithm~$\ref{alg:binary} $, is a binary search algorithm based on  Algorithm~$ \ref{alg:grid}$), that efficiently solves global trajectory optimization problems. We note that the key idea behind Algorithm~$\ref{alg:grid} $ is the use of parallel Grover operators for quantum search. We proved (using Theorem~$ \ref{thm:main} $,) that Algorithm $\ref{alg:grid}$ finds success within the expected running time of  $\OO\left(\max\left(\sqrt{\frac{n_i}{m_i}}\right)_{i=1}^{i=k} \right) $. This is a big improvement over a pure non-parallel classical algorithm which is of the order of  $ \prod_{i=1}^{k} n_i $. It is important to note that such enhanced computational efficiency of our proposed algorithms comes at a cost of increased complexity of the underlying quantum circuitry. We also note that our assumption on the availability of local oracle functions may not be true in many cases. Still, if these conditions are satisfied, then the computational advantages offered by our quantum algorithms could potentially enable us to tackle very high dimensional problems, which are beyond the capability of classical computers. The proposed algorithms are especially valuable in cases where solutions within specified bounds might be sufficient for practical use, instead of absolute global optimum solutions that can be difficult to find.  
Although, quantum computing is still in its infancy, and limited by the number of qubits available, hopefully, in the future with the advent of more powerful quantum computers the practical implementation of the algorithms presented in this work will become feasible.

\newpage

		\bibliographystyle{plain}
		
		\bibliography{Bibliography}

\end{document}